\newtheorem{lemma}{Lemma}
\newtheorem{theorem}{Theorem}
\newtheorem{cor}{Corollary}
\begin{document}
%
\title{Reverse Edge Cut-Set Bounds for Secure Network Coding}

\author{\IEEEauthorblockN{Wentao Huang and Tracey Ho}
\IEEEauthorblockA{Department of Electrical Engineering\\
California Institute of Technology\\
\{whuang,tho\}@caltech.edu}
\and
\IEEEauthorblockN{Michael Langberg}
\IEEEauthorblockA{Department of Electrical Engineering\\
University at Buffalo, SUNY\\
mikel@buffalo.edu}
\and
\IEEEauthorblockN{Joerg Kliewer}
\IEEEauthorblockA{Department of ECE\\
New Jersey Institute of Technology\\
jkliewer@njit.edu}}


%


\maketitle

\begin{abstract}
We consider the problem of secure
communication over a network in the presence of wiretappers.
We give a new cut-set bound on secrecy capacity which takes into
account the contribution of both forward and backward edges
crossing the cut, and the connectivity between their endpoints
in the rest of the network. We show the bound is tight on a
class of networks, which demonstrates that it is not possible to
find a tighter bound by considering only cut set edges and their
connectivity.
\end{abstract}


%
\IEEEpeerreviewmaketitle

\allowdisplaybreaks[4]
\section{Introduction}
Consider a noise free communication network in which an information source $S$ wants to transmit a secret message to the destination $D$ over the network in the presence of a wiretapper who can eavesdrop a subset of edges. The secure network coding problem, introduced by Cai and Yeung \cite{Cai02}, studies the secrecy capacity of such networks. Under the assumptions that 1) all edges have unit capacity; 2) the wiretapper can eavesdrop any subset of edges of size up to $z$; 3) only $S$ has the ability to generate randomness, \cite{Cai02} shows that the secrecy capacity is $x-z$, where $x$ is the min-cut from $S$ to $D$. Subsequent works have studied various ways to achieve this capacity with codes on fields of smaller size \cite{Feldman04}, coset codes \cite{Rouayheb12}, and universal codes \cite{Silva10}.

Though the secrecy capacity is well understood in this special case, much less is known under a more general setting. In particular, if either edge capacities are not uniform, or the collection of possible wiretap sets is more general (i.e., not characterized by a simple parameter $z$), Cui et al. \cite{Cui13} show that finding the secrecy capacity is NP-hard. On the other hand, if randomness is allowed to be generated at non-source nodes, Cai and Yeung ~\cite{CY07} give an example in which this can be advantageous, and provide a necessary and sufficient condition for a linear network code to be secure. However, for this case \cite{Huang13, chan_grant_08} show that finding the secrecy capacity is at least as difficult as the long-standing open problem of determining the capacity region of multiple-unicast network coding. To the best of our knowledge, under these general settings, the only known bounds of secrecy capacity are given implicitly in terms of entropy functions/entropic region \cite{Chan08, jalali12capacity}, whereas determining the entropic region is a long standing open problem as well.

This paper gives the first explicit upper bound on secrecy capacity for the secure network coding problem in the case where non-source nodes can generate randomness.
Our bound is based on cut-sets and has an intuitive graph-theoretic interpretation. The key observation is that unlike traditional cut-set bounds which only consider forward edges, for the secure network coding problem backward edges may also be helpful in a cut if down-stream (hence non-source) nodes can generate randomness, as shown in Fig. \ref{bwdhelp}-(a). Here the backward edge $(A,S)$ can transmit a random key back to the source to protect the message, and enable secrecy rate 1 to be achieved. 
However, one should be careful in counting the contribution of backward edges since they are not always useful, such as edge $(D,A)$ in Fig. \ref{bwdhelp}-(b). Notice that the networks of (a) and (b) are identical from the perspective of cuts because they each contain a cut with two forward edges and a cut with one forward edge and one backward edge. Hence to avoid a loose bound we have to see beyond the cut: in this simple example the backward edge in (a) is helpful because it is connected to the forward edge, while the one in (b) is not. More generally, this motivates us to take into account the connectivity from backward edges to forward edges, described by a 0-1 connectivity matrix $C$. We show that the rank structure of the submatrices of $C$ characterizes the utility of the backward edges, and use this to obtain an upper bound on secure capacity.

\begin{figure}[htpb]
\centering
  \subfigure[Backward edge helpful]{\includegraphics[width=0.21\textwidth]{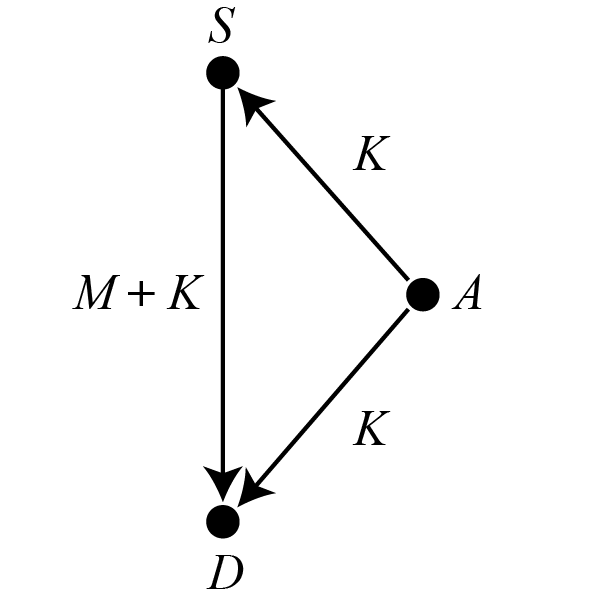}}
  \subfigure[Not helpful]{\includegraphics[width=0.21\textwidth]{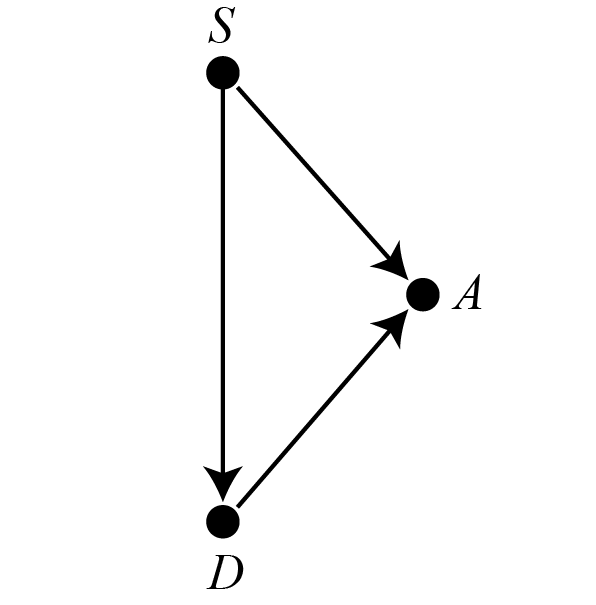}}
  \caption{Networks with unit capacity edges and $z=1$.}\label{bwdhelp}
\end{figure}

Finally we show that given any network cut, we can construct a network with the same cut set edges and connectivity between their endpoints, such that our bound is achievable by random scalar linear codes. Hence the bound is optimal in the sense
that it is not possible to find a better bound by merely considering the cut set edges and their connectivity.

\section{Models}
Consider a directed network $\mathcal{G}=(\mathcal{V},\mathcal{E})$ and let $\mathcal{A} \subset 2^{\mathcal{E}}$ be a collection of  wiretap sets. Since $\mathcal{A}$ is arbitrary (i.e., non-unform), without loss of generality we may assume all edges have unit capacity, because any edge of larger capacity can be replaced by a number of parallel unit capacity edges in both $\mathcal{G}$ and $\mathcal{A}$. In this work we focus on the single source single terminal setting. This seemingly simple setting is as at least as hard as determining the capacity region of multiple unicast network coding \cite{Huang13}. Let $S$ be the source and $D$ be the sink, $S$ wants to deliver a secret message $M$ to $D$ under perfect secrecy with respect to $\mathcal{A}$, i.e., denote $X(A)$ as the signals transmitted on $A \subset \mathcal{E}$, then $\forall A \in \mathcal{A}$, $I(M;X(A))=0$.
For all $i \in \mathcal{V}$, denote by $K_i$ the independent randomness generated at node $i$ that might be used as keys to protect the message.

Consider an arbitrary cut $V \subset {\mathcal{V}}$ such that $S \in V$ and $D \in V^c$. Denote $E_V^{\text{fwd}} = \{(i,j) \in \mathcal{E}: i \in V, j \in V^c\}$ as the set of forward edges with respect to $V$, and $E_V^{\text{bwd}} = \{(i,j) \in \mathcal{E}: i \in V^c, j \in V\}$ as the set of backward edges. Assume $|E_V^\text{fwd}|=x$ and $|E_V^\text{bwd}|=y$, we denote the $x$ forward edges by $e^\text{fwd}_1, e^\text{fwd}_2, ..., e^\text{fwd}_x$, and the $y$ backward edges by $e^\text{bwd}_1, e^\text{bwd}_2, ..., e^\text{bwd}_y$. Let $C_{b \to f} = ( c'_{ij} )$ be an $x \times y$ (0-1) matrix characterizing the connectivity from the backward edges to the forward edges. More precisely,
\begin{align*}
  c'_{ij} = \left\{ \begin{array}{ll}
    1 & \begin{array}{l}
      \text{if }\exists \text{ a directed path from head}(e^\text{bwd}_j) \text{ to tail}(e^\text{fwd}_i) \\ \text{that does not pass through any nodes in $V^c$}
    \end{array}
     \\
    0 & \begin{array}{l}\text{otherwise}    \end{array}
  \end{array} \right.
\end{align*}

\section{Cut-set Bound}
This section gives a cut-set bound of the secure capacity with respect to the cut $V$ and its connectivity matrix $C_{b \to f}$. We first prove a lemma before formally introducing the bound.
\begin{lemma}
 Given an arbitrary (0-1) matrix $C=(c_{ij})$ of size $a \times b$ and $\mathcal{U}$ a collection of submatrices of $C$, there is a large enough $q$ such that there exists a matrix $\bar{C} \in \mathbb{F}_q^{a\times b} = (\bar{c}_{ij})$ with following properties: 1) $\bar{c}_{ij}=0$ if $c_{ij}=0$; 2) $\forall$ $U \in \mathcal{U}$, assume its size is $m \times n$ and let the corresponding submatrix of $\bar{C}$ be $\bar{U}$, then rank$(\bar{U}) = \max_{V \in \mathbb{F}_q^{m \times n}, v_{ij}=0 \text{ if } u_{ij}=0}$ rank$(V)$, i.e., $\bar{U}$ is rank maximized subject to the zero constraints given in $C$. In particular, $q > |\mathcal{U}|ab$ is sufficient.
\end{lemma}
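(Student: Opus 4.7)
The plan is a standard Schwartz--Zippel type existence argument. First, introduce indeterminates: for each position $(i,j)$ with $c_{ij}=1$, let $x_{ij}$ be a formal variable, and form the symbolic matrix $\tilde C$ with $\tilde c_{ij}=x_{ij}$ if $c_{ij}=1$ and $\tilde c_{ij}=0$ otherwise. For each $U\in\mathcal{U}$ of size $m\times n$, let $\tilde U$ be the corresponding submatrix of $\tilde C$, and define $r_U$ to be the largest $r$ such that some $r\times r$ submatrix of $\tilde U$ has a determinant which is not the identically zero polynomial in the $x_{ij}$. By definition $r_U$ coincides with the maximum rank achievable by any $\mathbb{F}_q$-matrix respecting the zero pattern, for all sufficiently large $q$. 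Pick, for each $U$, one such maximal submatrix, and denote its determinant polynomial by $p_U$; by construction $p_U\not\equiv 0$.

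Next, form the single polynomial
\[
P(\{x_{ij}\}) \;=\; \prod_{U\in\mathcal{U}} p_U.
\]
Since the polynomial ring is an integral domain and each $p_U$ is nonzero, $P\not\equiv 0$. Each $p_U$ has total degree $r_U\le\min(m,n)\le\min(a,b)$, so
\[
\deg P \;\le\; \sum_{U\in\mathcal{U}} r_U \;\le\; |\mathcal{U}|\min(a,b)\;\le\;|\mathcal{U}|\,ab.
\]

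Finally, assign values to the $x_{ij}$ independently and uniformly at random from $\mathbb{F}_q$. By the Schwartz--Zippel lemma, the probability that $P$ evaluates to zero is at most $\deg(P)/q$, which is strictly less than $1$ whenever $q>|\mathcal{U}|ab$. Hence there exists an assignment $\bar C$ with $P(\bar C)\ne 0$. For such $\bar C$, every factor $p_U(\bar C)$ is nonzero, so each $\bar U$ contains a nonsingular $r_U\times r_U$ submatrix and thus attains the maximum possible rank $r_U$, yielding property 2). Property 1) is automatic, since values were only assigned at positions where $c_{ij}=1$.

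There is no real obstacle: the only point requiring care is the equivalence between the symbolic maximum rank and the maximum rank attainable over $\mathbb{F}_q$ under the zero pattern, but this is precisely what Schwartz--Zippel delivers — a nonvanishing minor polynomial must be nonvanishing at some point of $\mathbb{F}_q^{\text{(\# variables)}}$ once $q$ exceeds its degree. The product trick above handles the simultaneous rank-maximization requirement across all $U\in\mathcal{U}$ in one shot, giving the stated field-size bound $q>|\mathcal{U}|ab$.
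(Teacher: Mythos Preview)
Your argument is correct and follows essentially the same approach as the paper: introduce indeterminates at the nonzero positions, identify for each $U\in\mathcal{U}$ a maximal nonvanishing minor, and use Schwartz--Zippel to guarantee a common good evaluation once $q>|\mathcal{U}|ab$. The only cosmetic difference is that you multiply the minor polynomials and apply Schwartz--Zippel once to the product, whereas the paper applies Schwartz--Zippel to each minor separately and then takes a union bound; these are equivalent and yield the same field-size requirement.
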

\begin{proof}
  Consider a finite field $\mathbb{F}_q$ of order $q$, and any $U \in \mathcal{U}$, let $$\bar{V} = \arg \max_{V \in \mathbb{F}_q^{m \times n}, v_{ij}=0 \text{ if } u_{ij}=0} \text{rank}(V),$$ and let $r_U = \text{rank}(\bar{V})$. So $\bar{V}$ contains an $r_U \times r_U$ full rank submatrix, denoted by $\underline{V}$. 
  Let $\underline{C}=(\underline{c}_{ij})$ be the submatrix of $C$ corresponding to the position of $\underline{V}$. Now consider a polynomial matrix $\underline{V}[\bm{x}] = (\underline{v}_{ij})$ defined by
  \begin{align*}
    \underline{v}_{ij} = \left\{ \begin{array}{ll}
      0 & \text{if } {\underline{c}}_{ij}=0\\
      x_{ij} & \text{if } {\underline{c}}_{ij}=1
    \end{array}\right.
  \end{align*}
  where the $x_{ij}$'s are indeterminates. Then it follows that $\det( \underline{V}[\bm{x}] )$ is not the zero polynomial because otherwise $\det(\underline{V}) = 0$ and $\underline{V}$ cannot be full rank. Now let the non-zero entries of $\underline{V}[\bm{x}]$, i.e., all the $x_{ij}$'s, be i.i.d. uniformly distributed on $\mathbb{F}_q$. By the Schwartz-Zippel lemma,
  \begin{align*}
    \Pr\left\{ \det(\underline{V}[\bm{x}]) = 0 \right\} \le \frac{r_U^2}{q} \le \frac{ab}{q}
  \end{align*}
  Notice that the polynomial matrix $\underline{V}[\bm{x}]$ is in fact a submatrix of a $a \times b$ polynomial matrix $B[\bm{x}]=(b_{ij})$ defined by
   \begin{align*}
    b_{ij} = \left\{ \begin{array}{ll}
      0 & \text{if }c_{ij}=0\\
      x_{ij} & \text{if }c_{ij}=1
    \end{array}\right.
  \end{align*}
  where again the $x_{ij}$'s are indeterminates. Let all the non-zero entries of $B[\bm{x}]$ follow i.i.d. uniform distribution on $\mathbb{F}_q$, and by the union bound, we have
  \begin{align*}
    \Pr\left\{ \bigcup_{U \in \mathcal{U}} \det(\underline{V}[\bm{x}]) \ne 0 \right\} & \ge 1 - \sum_{U \in \mathcal{U}} \Pr\{ \det(\underline{V}[\bm{x}]) = 0 \}\\
    & \ge 1 - |\mathcal{U}| \frac{ab}{q}
  \end{align*}
  Therefore if $q > |\mathcal{U}|ab$, there exists an evaluation of $B[\bm{x}]$ such that $\det(\underline{V}[\bm{x}]) \ne 0$ for any $U \in \mathcal{U}$. This evaluation gives a desired $\bar{C}$, because for any $U \in \mathcal{U}$, the corresponding submatrix $\bar{U}$ of $\bar{C}$ contains a full rank square submatrix of size $r_U$, and by definition $r_U$ is the maximum rank $\bar{U}$ can achieve subject to the zero constraints in $C$.
\end{proof}

Define
\begin{align*}
    C = \left( \begin{array}{l}
      C_{b\to f}\\I_y
    \end{array}
     \right),
\end{align*}
where $I_y$ is the identity matrix of order $y$. Notice that rows in $C$ correspond to edges crossing the cut in $\mathcal{G}$. Denote $\mathcal{A}_V = \{A \cap ( E_{V}^{\text{fwd}} \cup E_{V}^{\text{bwd}} ) : A \in \mathcal{A}\}$. For $A \in \mathcal{A}_V$, denote $U_A$ the submatrix of $C$ formed by the rows corresponding to edges in $A$. Let $\mathcal{U} = \{U_A, A \in \mathcal{A}_V \}$, and
 let $\bar{C}$ be the rank maximized matrix specified in Lemma 1 with respect to $C$ and $\mathcal{U}$. For $U_A \in \mathcal{U}$, let $\bar{U}_A$ be the corresponding submatrix of $\bar{C}$. We are now ready to state our main result.
\begin{theorem}\label{th1}
 The secrecy capacity is bounded by
\begin{align*}
  \mathfrak{C} \le x + \min_{A \in \mathcal{A}_V} \text{\emph{rank}}(\bar{U}_A) - |A|
\end{align*}
\end{theorem}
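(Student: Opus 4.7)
The plan is to upper bound $H(M)$ by combining three ingredients: the decoding constraint (that $M$ is a deterministic function of the forward-cut signals $X_F$ and the downstream randomness $K_{V^c}$), the secrecy hypothesis $I(M;X(A))=0$, and the combinatorial content of Lemma~1, which controls how much information the wiretapped edges can jointly reveal about the backward-cut signals $X_B$. Fix an arbitrary $A\in\mathcal{A}_V$; I aim to prove $H(M)\le x+\text{rank}(\bar{U}_A)-|A|$, and then take the minimum over $A$.

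I would start from the identity $H(M)=I(M;X_F\mid K_{V^c})$, which follows from two observations: since $D\in V^c$, every signal reaching $D$ is obtainable from $X_F$ together with $K_{V^c}$, so $H(M\mid X_F,K_{V^c})=0$; and $M\perp K_{V^c}$ because randomness at downstream nodes is generated independently of the source message. Because $X_B$ is computed entirely in $V^c$, it is also a deterministic function of $(X_F,K_{V^c})$, and hence so is $X(A)$. Inserting the wiretap observation and using the chain rule, together with $I(M;X(A)\mid X_F,K_{V^c})=0$, gives
\[
H(M)=I(M;X(A)\mid K_{V^c})+I(M;X_F\mid X(A),K_{V^c}).
\]
The second term is bounded directly by $H(X_F\mid X(A),K_{V^c})\le x-|A\cap E_V^{\text{fwd}}|$, since $X(A)$ already reveals the wiretapped forward edges.

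For the first term I would combine secrecy with independence to obtain the symmetric identity $I(M;X(A)\mid K_{V^c})=I(M;K_{V^c}\mid X(A))$, thereby reducing the problem to controlling how much the wiretap reveals about the downstream randomness. The cut topology implies that each signal in $X(A)$ decomposes functionally into a contribution from the in-$V$ side (a function of $M$ and $K_V$) plus a contribution routed from the backward-cut signals $X_B$ whose support pattern is exactly the corresponding row of $C$. Lemma~1 then certifies that $\text{rank}(\bar{U}_A)$ is the maximum rank, across all code choices consistent with that sparsity pattern, of the backward-edge contributions to $X(A)$. Subtracting the $|A\cap E_V^{\text{bwd}}|$ rank that is automatically produced by the $I_y$ block of $C$ isolates the additional rank supplied by the wiretapped forward edges, yielding $I(M;X(A)\mid K_{V^c})\le \text{rank}(\bar{U}_A)-|A\cap E_V^{\text{bwd}}|$. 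Summing the two bounds gives the theorem.

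The step I expect to be the main obstacle is converting the linear-algebraic rank statement of Lemma~1, which is stated for an explicit matrix over $\mathbb{F}_q$, into an entropy inequality valid for arbitrary, possibly nonlinear, codes on the network. For linear codes the argument is essentially rank counting; in the general case one must argue that no nonlinear scheme extracts more joint $X_B$-information from the wiretap than the maximum rank of $\bar{U}_A$ permits. I would attempt this via submodularity of entropy combined with the decoding and secrecy equalities above, if necessary after first passing to an i.i.d.\ block extension over a large field so that the random coding guarantees of Lemma~1 align cleanly with the information-theoretic quantities.
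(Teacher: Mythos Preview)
Your decomposition $H(M)=I(M;X(A)\mid K_{V^c})+I(M;X_F\mid X(A),K_{V^c})$ and the bound $x-|A\cap E_V^{\text{fwd}}|$ on the second term are fine, and the identity $I(M;X(A)\mid K_{V^c})=I(M;K_{V^c}\mid X(A))$ is correct. The gap is precisely the step you flag: the inequality $I(M;X(A)\mid K_{V^c})\le \text{rank}(\bar U_A)-|A\cap E_V^{\text{bwd}}|$ is asserted but not proved, and your proposed justification (``each signal in $X(A)$ decomposes functionally into an in-$V$ contribution plus a contribution routed from $X_B$ with support pattern given by a row of $C$'') is a linear-code statement that does not hold for arbitrary schemes. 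The appeal to Lemma~1 does not bridge this: Lemma~1 only certifies the existence of a single matrix $\bar C$ over $\mathbb{F}_q$ with simultaneously maximized submatrix ranks; it says nothing about entropy under nonlinear codes. Passing to a block extension does not help either, since the converse must hold for \emph{every} code, not just random linear ones.

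The paper closes exactly this gap, but by a different mechanism than you propose. First, it proves a purely combinatorial fact about the $0$--$1$ matrix $U_A$ (Lemma~3): there is a partition $A=A_1\cup A_2$ with $|f_{A_1}|+|A_2|=\text{rank}(\bar U_A)$, where $f_{A_1}$ is the set of backward edges feeding $A_1$. The point is that $|f_{A_1}|$ is a cardinality, so $H(f_{A_1})\le |f_{A_1}|$ per channel use holds for any code --- no linearity is needed to convert rank into entropy. Second, the paper does \emph{not} work with aggregate signals $X_F,X_B$; because $\bar{\mathcal G}$ is cyclic, it imposes unit delays on the cut edges and carries out a time-indexed analysis, expanding $H(\mathcal F,\mathcal B\mid K_D)-H(\mathcal F,\mathcal B\mid M,K_D)$ via the chain rule over $t=1,\dots,T$. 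Causality then supplies the Markov chain $(K_D,W)\to(f_{A_F}[0..i-1],W)\to F_A[i]$ (Lemma~4), and the resulting telescoped mutual information is controlled by $H(f_{A_F}[i]\mid B_A[i])\le \text{rank}(\bar U_A)-|A_B|-|A_2|$, which is the corollary of Lemma~3. The combinatorial partition and the time-unrolling are the two ingredients your outline lacks; without them I do not see how to turn the rank heuristic into an entropy inequality valid for general codes.
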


In the special case of uniform wiretap sets, i.e., $\mathcal{A} = \{ A \subset \mathcal{E} : |A| \le z \}$, Theorem \ref{th1} reduces to the following form.
\begin{cor}
  Define $k_b = \min_{\{\bar{U}:\text{\emph{ $z \times y$ submatrix of }} \bar{C}\}} \text{\emph{rank}}(\bar{U}),$ then the secrecy capacity is bounded by
  \begin{align*}
  \mathfrak{C} \le x + k_b -z
\end{align*}
\end{cor}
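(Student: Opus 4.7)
The plan is to derive this corollary directly from Theorem~\ref{th1} by specializing the bound to the uniform wiretap model; no new probabilistic or algebraic machinery is needed beyond Lemma~1, which is already invoked in the construction of $\bar{C}$. First I would identify $\mathcal{A}_V$ for the uniform case: since $\mathcal{A}$ is the collection of all edge subsets of cardinality at most $z$, intersecting each such subset with $E_V^{\text{fwd}} \cup E_V^{\text{bwd}}$ gives exactly $\mathcal{A}_V = \{A \subseteq E_V^{\text{fwd}} \cup E_V^{\text{bwd}} : |A| \le z\}$, i.e., every subset of cut edges of size at most $z$.

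Next I would show that the minimum in Theorem~\ref{th1} is attained at $|A|=z$. The key observation is monotonicity under inclusion: enlarging $A$ by one edge appends one row to $\bar{U}_A$, which increases $|A|$ by exactly $1$ and $\text{rank}(\bar{U}_A)$ by at most $1$. Hence $A \mapsto \text{rank}(\bar{U}_A) - |A|$ is non-increasing as $A$ grows, so the minimum over $\mathcal{A}_V$ is attained at some $A$ of size exactly $z$ (assuming $z \le x+y$; otherwise the trivial cut-set bound $\mathfrak{C} \le x$ already suffices).

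Finally, when $|A|=z$, the submatrix $\bar{U}_A$ ranges over all $z \times y$ row-submatrices of $\bar{C}$ as $A$ ranges over the $z$-subsets of cut edges. By the definition of $k_b$, $\min_{|A|=z}\text{rank}(\bar{U}_A) = k_b$, and substituting into Theorem~\ref{th1} yields $\mathfrak{C} \le x + k_b - z$. The only subtlety is the monotonicity step, but it is immediate from the fact that rank grows by at most one per added row; I do not anticipate any further obstacles.
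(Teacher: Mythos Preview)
Your proposal is correct and is exactly the approach the paper intends: the paper states that Theorem~\ref{th1} ``reduces to'' the corollary in the uniform case and gives no separate proof, so your specialization---identifying $\mathcal{A}_V$ as all cut-edge subsets of size at most $z$, using the rank-increases-by-at-most-one monotonicity to restrict to $|A|=z$, and then matching $\min_{|A|=z}\text{rank}(\bar U_A)$ with $k_b$---fills in precisely the details left implicit.
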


In what follows, we will prove Theorem 1. Given a cut of $x$ forward edges, $y$ backward edges, and the connectivity matrix $C_{b\to f}$, we construct an upper bounding network $\bar{\mathcal{G}}$ as follows: 1) Absorb all nodes downstream the cut, i.e., all $v \in V^c$, into the sink $D$. So for all $i,j$, head($e^\text{fwd}_i$)=$D$, tail($e^\text{bwd}_j$)=$D$. 2) Connect the source to each forward edge with infinite unit capacity edges $(S, \text{tail}(e^\text{fwd}_i))$.
3) Connect the backward edges to the forward edges according to $C_{b \to f}$. More precisely, add an infinite amount of unit capacity edges $(\text{head}(e^\text{bwd}_j),\text{tail}(e^\text{fwd}_i))$ if and only if $c'_{ij}=1$. Finally, in $\bar{\mathcal{G}}$ we only allow $S$ and $D$ to generate independent randomness.

\begin{lemma}
  The secure unicast capacity of $\bar{\mathcal{G}}$ upperbounds the secure unicast capacity of $\mathcal{G}$.
\end{lemma}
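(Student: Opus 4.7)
My plan is to show that any secure code on $\mathcal{G}$ of rate $R$ can be emulated by a code of the same rate on $\bar{\mathcal{G}}$, which immediately gives the desired capacity bound. The intuition is that $\bar{\mathcal{G}}$ preserves the cut-edge structure of $\mathcal{G}$ across $V$, while its added infinite-capacity edges on each side of the cut can only make transmission easier rather than harder.

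First I would fix a secure code on $\mathcal{G}$, yielding a well-defined joint distribution on $M$, the independent local randomness $\{K_i\}_{i\in\mathcal{V}}$, and every edge signal. I then construct a code on $\bar{\mathcal{G}}$ by splitting the randomness across the cut: let $S$ in $\bar{\mathcal{G}}$ generate $M$ and $\tilde K_V = \{\tilde K_i : i \in V\}$ distributed as in $\mathcal{G}$, and let $D$ generate $\tilde K_{V^c}$ analogously. Using the infinite-capacity links $(S,\text{tail}(e^\text{fwd}_i))$, $S$ delivers $(M,\tilde K_V)$ to each $u_i = \text{tail}(e^\text{fwd}_i)$. Symmetrically, $D$ absorbs all of $V^c$ and locally simulates its side of $\mathcal{G}$ from $\tilde K_{V^c}$ together with the incoming forward-edge signals, producing on each backward edge $e^\text{bwd}_j$ the same signal $\tilde X(e^\text{bwd}_j)$ that $e^\text{bwd}_j$ carries in $\mathcal{G}$. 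Each $w_j = \text{head}(e^\text{bwd}_j)$ forwards $\tilde X(e^\text{bwd}_j)$ via infinite-capacity links to exactly those $u_i$ with $c'_{ij}=1$, and $u_i$ then applies the $V$-side function induced by the original code to produce $\tilde X(e^\text{fwd}_i)$ on its forward edge.

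The critical observation making the emulation valid is that the $w_j$-to-$u_i$ connectivity in $\bar{\mathcal{G}}$ exactly matches $C_{b\to f}$, so any function of $(M,K_V)$ together with the backward-edge signals reachable from $e^\text{bwd}_j$ to $\text{tail}(e^\text{fwd}_i)$ in $\mathcal{G}$ is also computable at $u_i$ in $\bar{\mathcal{G}}$; a symmetric statement holds for $V^c$. Consequently the joint distribution of $(M,\tilde X^\text{fwd},\tilde X^\text{bwd})$ under the new code matches that of $(M,X^\text{fwd},X^\text{bwd})$ under the original. Decoding carries over trivially, since $D$ in $\bar{\mathcal{G}}$ subsumes all of $V^c$ and can replay the original decoder. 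For secrecy, every wiretap set $B\in\mathcal{A}_V$ equals $A\cap (E_V^\text{fwd}\cup E_V^\text{bwd})$ for some $A\in\mathcal{A}$, and matching cut-edge distributions give $I(M;\tilde X(B))\le I(M;X(A))=0$.

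The main technical obstacle I anticipate is rigorously justifying the emulation when cycles cross the cut and couple $X^\text{fwd}$ and $X^\text{bwd}$ through a fixed-point relation rather than a feed-forward computation. I would address this by observing that since the original code admits a consistent assignment of cut-edge signals on $\mathcal{G}$, the same joint distribution satisfies every constraint imposed by $\bar{\mathcal{G}}$: the infinite-capacity $S$-to-$u_i$ and $w_j$-to-$u_i$ links never forbid anything permitted by $\mathcal{G}$'s $V$-subnetwork, and similarly for the absorbed $V^c$ side, so the fixed point realized on $\mathcal{G}$ is a legitimate execution of the code on $\bar{\mathcal{G}}$ as well.
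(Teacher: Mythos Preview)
Your approach is the same simulation argument the paper sketches: move all $V$-side randomness to $S$, all $V^c$-side randomness to $D$, replay the original code over the infinite links, and observe that the cut-edge signals are distributed exactly as in $\mathcal{G}$, so decodability and secrecy with respect to $\mathcal{A}_V$ carry over. Your write-up is in fact considerably more detailed than the paper's two-line sketch.

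There is one point the paper makes that your proposal omits. You send $(M,\tilde K_V)$ in the clear over the infinite parallel links and then argue secrecy only against sets $B\in\mathcal{A}_V$, which by definition contain no parallel edges. The paper does not define the wiretap collection on $\bar{\mathcal{G}}$ to be $\mathcal{A}_V$ a priori; instead its sketch argues that ``all infinite parallel unit capacity edges are perfectly secure because they can be protected by an infinite number of local keys,'' i.e., one-time padding using the unbounded supply of parallel edges makes any finite wiretap set on those links harmless. Only after this observation does the paper conclude (in the proof of Theorem~\ref{th1}) that one may take $\mathcal{A}=\mathcal{A}_V$ on $\bar{\mathcal{G}}$. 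So your argument is correct once the adversary on $\bar{\mathcal{G}}$ is restricted to $\mathcal{A}_V$, but you should either state that restriction as part of the definition of $\bar{\mathcal{G}}$ or add the one-time-pad remark to justify it; otherwise your step of transmitting $M$ raw over $(S,\text{tail}(e^\text{fwd}_i))$ would be vulnerable under a wiretap model that touches those edges.

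Your handling of cycles via a fixed-point/consistency argument is fine; the paper sidesteps the issue in this lemma and instead imposes unit delays on the cut edges later, which serves the same purpose.
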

\begin{proof}[Proof Sketch]
  Note that all infinite parallel unit capacity edges are perfectly secure because they can be protected by an infinite number of local keys. Hence for any coding scheme on $\mathcal{G}$, the same coding scheme can be simulated on $\bar{\mathcal{G}}$ securely.  The assumption that only $S$ and $D$ can generate randomness is optimal, because if any other node wishes to generate independent randomness, such randomness may be generated at $S$ and sent to the node through the infinite parallel edges.
\end{proof}

Due to the fact that $\bar{\mathcal{G}}$ has a simplified structure, in the proof of Theorem 1 we shall always consider $\bar{\mathcal{G}}$ instead of $\mathcal{G}$ unless otherwise specified. Note that in $\bar{\mathcal{G}}$ only the edges crossing the cut $V$ are vulnerable, hence we may assume any wiretap set only contains these edges. Therefore $\mathcal{A}=\mathcal{A}_V$ and for notational convenience in what follows we no longer distinguish them. Let $F_1, ..., F_x$ be the signals transmitted on edges $e^\text{fwd}_1, ..., e^\text{fwd}_x$; $B_1, ..., B_y$ be the signals transmitted on edges $e^\text{bwd}_1, ..., e^\text{bwd}_y$. 
 Consider any $A \in \mathcal{A}$ and the set of signals $f_A = \{ \cup f_e: e \in A\}$ defined as follows.
\begin{align*}
  f_e = \left\{
  \begin{array}{ll}
\{B_j\} & \text{ if } e=e^\text{bwd}_j\\
\{B_j : c'_{ij}=1\} & \text{ if } e=e^\text{fwd}_i
  \end{array}
   \right.
\end{align*}
The following lemma shows that the rank structure of the submatrices of $\bar{C}$ has interesting properties.

\begin{lemma}\label{lemimport}
For any $A \in \mathcal{A}$, there exists a partition $A = A_1 \cup A_2$, such that $|f_{A_1}| + |A_2| = $ rank($\bar{U}_A$).
\end{lemma}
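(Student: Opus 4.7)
The plan is to recast $\text{rank}(\bar{U}_A)$ combinatorially via bipartite matching and then extract the desired partition from a minimum vertex cover using K\"onig's theorem. By Lemma~1, $\text{rank}(\bar{U}_A)$ equals the maximum rank over all $\mathbb{F}_q$-matrices sharing the zero/nonzero pattern of the $0$--$1$ matrix $U_A$, which, for $q$ large enough, is precisely the \emph{term rank} of $U_A$, i.e., the size of a maximum matching in the bipartite graph $G_A$ whose vertex classes are the rows of $U_A$ (indexed by $A$) and the columns $\{1,\dots,y\}$, with an edge at every nonzero position. The inequality rank $\le$ term rank is a Laplace-expansion argument (any nonsingular $k\times k$ submatrix must have some nonzero permutation term, and hence a $k$-matching), while the reverse inequality is a direct consequence of the same Schwartz--Zippel/polynomial argument already employed inside the proof of Lemma~1.

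A second ingredient is to reformulate $|f_{A'}|$, for $A'\subseteq A$, in matrix-theoretic language: direct inspection of $C$ shows that column $j$ of the submatrix $U_{A'}$ is nonzero exactly when $B_j\in f_{A'}$, since an $e^{\text{bwd}}_j$-row contributes a $1$ only in column $j$ while an $e^{\text{fwd}}_i$-row has a $1$ in column $j$ iff $c'_{ij}=1$. Thus $|f_{A'}|$ is simply the number of nonzero columns of $U_{A'}$.

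With both reformulations in hand, the proof reduces to K\"onig's theorem. I would pick a minimum vertex cover $R\cup L$ of $G_A$ of size $\text{rank}(\bar{U}_A)$, where $R$ is a set of rows (hence a subset of $A$) and $L\subseteq\{1,\dots,y\}$ is a set of columns, and set $A_2:=R$, $A_1:=A\setminus R$; the two sets partition $A$ by construction. The cover property forces every nonzero entry of $U_{A_1}$ to lie in a column of $L$, giving $f_{A_1}\subseteq\{B_j:j\in L\}$ and hence $|f_{A_1}|\le|L|$. Conversely, minimality of the cover forces every $j\in L$ to carry a nonzero entry in some row of $A_1$ (otherwise $R\cup(L\setminus\{j\})$ would still be a valid cover), yielding the reverse containment and therefore $|f_{A_1}|=|L|$. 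Combining, $|f_{A_1}|+|A_2|=|L|+|R|=\text{rank}(\bar{U}_A)$, as desired. The main conceptual hurdle is the very first step---identifying the generic rank supplied by Lemma~1 with the term rank of the $0$--$1$ pattern, so that K\"onig duality becomes available; once that identification is secured, the rest of the argument is essentially mechanical.
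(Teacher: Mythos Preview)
Your proof is correct and takes a genuinely different route from the paper. The paper establishes the partition by an explicit block-decomposition algorithm: it permutes rows and columns of $U_A$ so that a maximal set of $r=\text{rank}(\bar{U}_A)$ ones lies on a counter-diagonal, then iteratively labels blocks as \emph{counter-diagonal}, \emph{non-zero}, \emph{zero}, \emph{zero}*, and \emph{arbitrary}, proving via a rank-$(r{+}1)$-contradiction argument that the \emph{zero}* blocks really are zero; the first $t$ rows after permutation form $A_2$ and the rest form $A_1$, and the block structure forces $|f_{A_1}|=r-t$. Your argument bypasses all of this by first identifying the generic rank supplied by Lemma~1 with the term rank of $U_A$ (which indeed holds over any field, not just large $q$, since a $k$-matching already yields a rank-$k$ assignment by putting $1$'s on the matched positions and $0$'s elsewhere), and then invoking K\"onig duality to read the partition off a minimum vertex cover. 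The K\"onig approach is considerably shorter and more transparent; the paper's approach, by contrast, is fully constructive and exposes the finer block structure of $U_A$, though downstream (Corollary~\ref{cor1} and Theorem~\ref{th1}) only the bare conclusion $|f_{A_1}|+|A_2|=\text{rank}(\bar{U}_A)$ is used, so nothing is lost by your substitution.
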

\begin{proof}
  The idea of the proof is to infer the structure of $U_A$ given the rank of $\bar{U}_A$ and the fact that $\bar{U}_A$ is rank maximized. Then since $U_A$ characterizes the connectivity to the edges in $A$ it becomes convenient to bound the size of $f_A$.

  Denote for short $r=\text{rank}(\bar{U}_A)$, so $\bar{U}_A$ contain an $r \times r$ submatrix whose determinant is non-zero, and therefore in $\bar{U}_A$ there exist $r$ non-zero entries at different columns and at different rows. Recall that an entry in $\bar{U}_A$ can be  non-zero only if this entry is 1 in $U_A$, hence $U_A$ contains $r$ entries of value 1 at different columns and different rows. Perform column and row permutations to move these 1's such that $U_A(r+1-i,i)=1, \forall 1\le i \le r$, i.e., they become the counter-diagonal entries of the upper-left block formed by the first $r \times r$ entries. See Fig. \ref{fig:eg} for an example. Note that permutations in $U_A$ are merely reordering of edges, and for notational convenience we denote the matrix after permutations as $U_A$ still.

  It then follows that $U_A(i,j)=0, \forall r<i\le |A|, r<j \le y$. Otherwise if any entry in this lower right block is non-zero, setting this and the aforementioned $r$ counter-diagonal entries as 1, and all other entries as 0 yields a matrix that satisfies the zero constraint in $U_A$ and it has rank $r+1$. But this is a contradiction because $r=\text{rank}(\bar{U}_A)$ is the maximum rank. Hence we label this block as $\emph{zero}$.

  Below we introduce an algorithm that further permutes $U_A$ and labels it blockwise. The algorithm takes a matrix $G$ of arbitrary size $m \times n$ and a positive integer parameter $k$ as input, such that the upper-left block $G_{UL}$ formed by the first $k \times k$ entries of $G$ has all 1's in its counter-diagonal. Now consider $G_{LL}= (g_{ij}), k < i \le m, 1 \le j \le k$ which is the lower-left block of $G$. If every column of $G_{LL}$ is non-zero, label this block as \emph{non-zero}, label $G_{UL}$ as \emph{counter-diagonal}, label the block $G_{UR}=(g_{ij}), 1 \le i \le k, k < j \le n $ as \emph{zero}*, return $t:=0$ and terminate. If $G_{LL}=\bm{0}$ or $G_{LL}$ is empty, label this block (if not empty) as \emph{zero}, label $G_{UL}$ as \emph{counter-diagonal}, label $G_{UR}$ as \emph{arbitrary}, return $t:=k$ and terminate.

  Otherwise $G_{LL}$ contains both zero and non-zero columns. In this case, first perform column permutations in $G$ to move all non-zero columns of $G_{LL}$ to the left and zero columns to the right. Assume that after permutation the first $u$ columns of $G_{LL}$ are non-zero, and the last $v$ columns are all zero. Label the block $(g_{ij}), k < i \le m, 1 \le j \le u$ as \emph{non-zero} and label the block $(g_{ij}), k < i \le m, u < j \le k$ as \emph{zero}. At this point some of the 1's originally in the counter-diagonal of $G_{UL}$ are misplaced due to column permutations, perform row permutations to move them back to the counter-diagonal. Note that only the first $k$ rows need to be permuted and the lower labeled block(s) is not affected. Label the block $(g_{ij}), k-u+1 \le i \le k , 1 \le j \le u$ as \emph{counter-diagonal}, label the block $(g_{ij}), 1 \le i \le k-u , 1 \le j \le u$ as \emph{arbitrary}, and label the block $(g_{ij}), k-u+1 \le i \le k , k < j \le n$ as \emph{zero}*. Then truncate the first $u$ columns and the last $m-k$ rows from $G$. Notice that the block formed by the first $v \times v$ entries in the truncated $G$ has all 1's in its counter-diagonal. Now invoke the algorithm recursively to the truncated $G$ with parameter $v < k$. The algorithm must terminate because the input parameter is a positive finite integer and cannot decrease indefinitely.

  Applying the algorithm to the matrix $U_A$ with parameter $k:=r$ will permute the rows and columns of $U_A$ and label it completely. Refer to Figure \ref{fig:eg} for an example. Notice that the algorithm always labels \emph{counter-diagonal}, \emph{non-zero} and \emph{zero} literally, i.e.,  by hypothesis all counter-diagonal-label blocks are square and have 1's in their counter-diagonals (but the off-counter-diagonal entries may be arbitrary); all non-zero-label blocks do not contain zero columns; and all zero-label blocks are all zero.   The only non-trivial label is \emph{zero}*, and we claim that the algorithm also labels \emph{zero}* correctly in the sense that a zero*-labeled block is indeed zero.

  \begin{figure}[htpb]
  \begin{center}
      \includegraphics[width=0.43\textwidth]{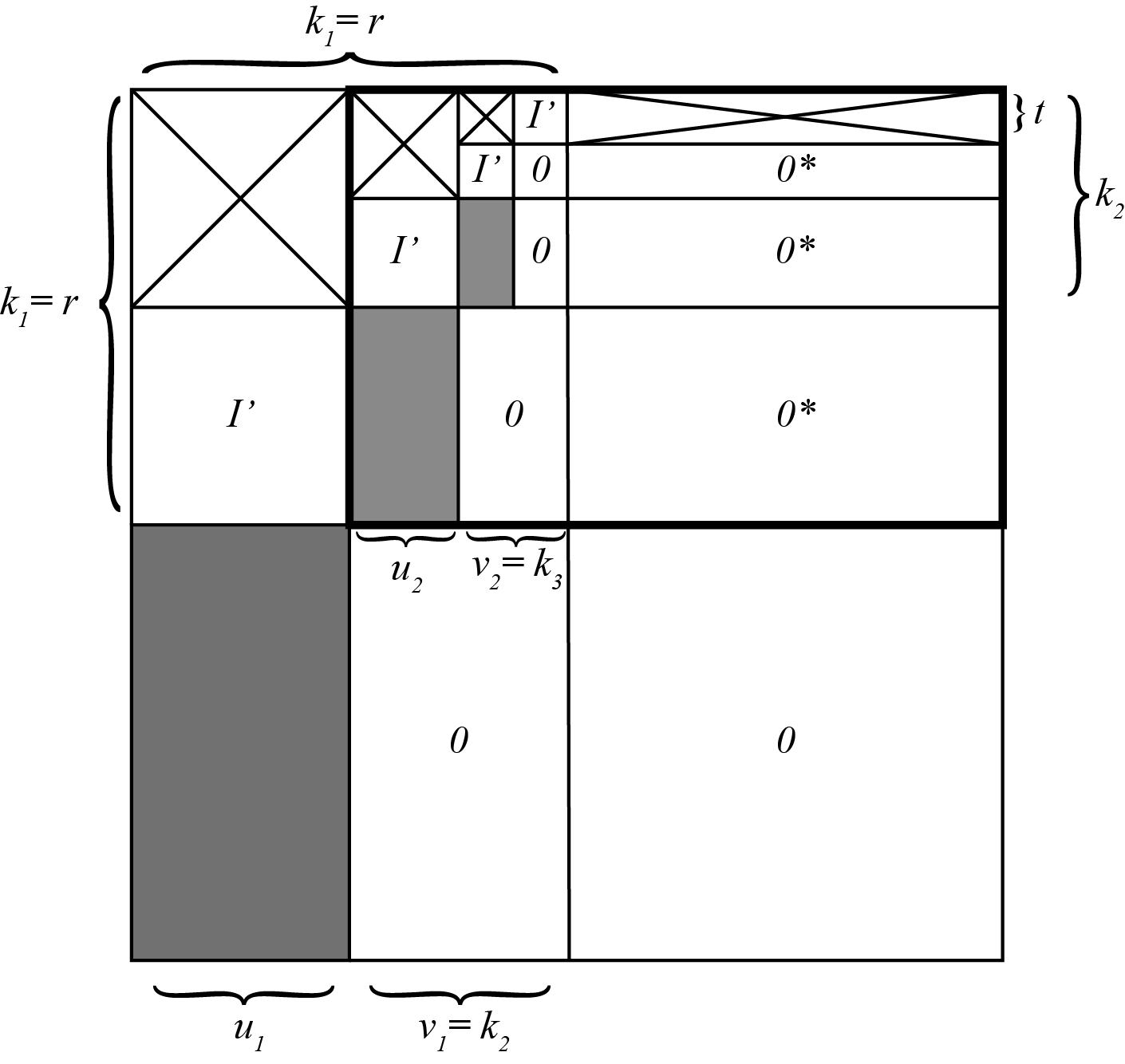}
  \caption{An example of a labeled $U_A$. \emph{zero} blocks are indicated by $0$; \emph{zero}* blocks are indicated by $0$*; \emph{counter-diagonal} blocks are indicated by $I'$; \emph{non-zero} blocks are colored in gray and \emph{arbitrary} blocks are crossed. The algorithm terminates in four iterations and returns $t$. Key parameters of the first two iterations are illustrated and the subscripts denote iteration numbers. The truncated $G$ after the first iteration is highlighted in bold line. Note that the first $t$ rows correspond to $A_2$, and the remaining rows correspond to $A_1$.}\label{fig:eg}
  \end{center}
\end{figure}

  To prove the claim, notice that all \emph{zero}* blocks pile up at the last $y-r$ columns of $U_A$, and consider any entry $\alpha_1$ of a \emph{zero}* block. By the algorithm the row of $\alpha_1$ must intersect a unique \emph{counter-diagonal} block, and denote the intersecting counter-diagonal entry of the \emph{counter-diagonal} block as $\beta_1$.
   By the algorithm this intersecting \emph{counter-diagonal} block must lie immediately on top of a \emph{non-zero} block. Therefore the lower \emph{non-zero} block contains a non-zero entry $\alpha_2$ in the same column as $\beta_1$. And again the row of $\alpha_2$ will intersect a counter-diagonal entry $\beta_2$ of a \emph{counter-diagonal} block. In exactly the same way we are able to find a sequence of entries $\alpha_3, \beta_3, \alpha_4, \beta_4 ...$ until we reach the lowest \emph{non-zero} block. Note that all these entries belong to distinct blocks, and because there is a finite number of blocks, the series is finite. In particular, let $w$ be the number of \emph{counter-diagonal} blocks that lie below or intersect the row of $\alpha_1$, then we can find $\beta_1, ..., \beta_w$ and $\alpha_1, ..., \alpha_{w+1}$, where $\alpha_{w+1}$ lies in the lowest \emph{non-zero} block. Now suppose for the sake of contradiction that $\alpha_1$ is non-zero, set $\alpha_1, ..., \alpha_{w+1}$ to 1, set all counter-diagonal entries of all \emph{counter-diagonal} blocks except $\beta_1, ... , \beta_w$ to 1, and set all other entries to 0. This produces a matrix of rank $r+1$ because all $r+1$ 1's appears in distinct columns and rows, which contradicts the fact that $\bar{U}_A$ is rank maximized.

  Hence all zero*-label blocks are indeed zero. In particular, after the permutations, the block $U_A(i,j), t < i \le |A| , r-t +1 \le j \le y$ is all zero.
  Now partition $A$ into $A_1 \cup A_2$, where $A_2$ is the subset of edges corresponding to the first $t$ rows of the permuted $U_A$. So $|A_2|=t$ and $|A_1|=|A|-t$. But the zero constraints in $U_A$ imply that $f_{A_1}$ contains $r-t$ of the $B_j$'s corresponding to the first $r-t$ columns, hence $|f_{A_1}| = r-t$. Finally $ |f_{A_1}| + |A_2| = r$.
\end{proof}

\begin{cor}\label{cor1}
  Partition $A$ into $A_1 \cup A_2$ as in Lemma \ref{lemimport}. Further partition $A_1$ as $A_F \cup A_B$, where $A_F \subset \{e^\text{fwd}_1, ..., e^\text{fwd}_x\}$, $A_B \subset \{e^\text{bwd}_1, ..., e^\text{bwd}_y\}$, then $H(f_{A_F}| f_{A_B}) \le \text{rank}(\bar{U}_A) - |A_B| - |A_2|$.
\end{cor}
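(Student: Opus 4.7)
The plan is to combine the counting identity of Lemma \ref{lemimport} with the basic observation that each backward signal $B_j$ rides on a unit-capacity edge in $\bar{\mathcal{G}}$, so $H(B_j)\le 1$. First I would rewrite the conditional entropy in terms of the set-difference of the random-variable collections: since $f_{A_B}\subseteq f_{A_F}\cup f_{A_B}=f_{A_1}$, one has
\begin{equation*}
H(f_{A_F}\mid f_{A_B}) = H(f_{A_F}\setminus f_{A_B}\mid f_{A_B}) \le H(f_{A_F}\setminus f_{A_B}) \le |f_{A_F}\setminus f_{A_B}|,
\end{equation*}
where the final inequality uses that $f_{A_F}\setminus f_{A_B}$ is a set of $B_j$-variables, each of entropy at most $1$ by the unit-capacity assumption, together with subadditivity.

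Next I would bound $|f_{A_F}\setminus f_{A_B}|$ purely combinatorially. Because distinct backward edges yield distinct singletons under the map $e^\text{bwd}_j\mapsto\{B_j\}$, the union $f_{A_B}$ has exactly $|A_B|$ elements. Since $f_{A_B}\subseteq f_{A_1}$, inclusion gives $|f_{A_F}\setminus f_{A_B}| = |f_{A_1}| - |f_{A_B}| = |f_{A_1}| - |A_B|$. Applying Lemma \ref{lemimport} to replace $|f_{A_1}|$ by $\text{rank}(\bar{U}_A)-|A_2|$ then yields
\begin{equation*}
|f_{A_F}\setminus f_{A_B}| = \text{rank}(\bar{U}_A) - |A_B| - |A_2|,
\end{equation*}
which combined with the chain above delivers the claimed inequality.

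There is no real obstacle — the work was already done in Lemma \ref{lemimport}. The only point that needs a sentence of care is the identity $|f_{A_B}|=|A_B|$, which relies on the definition $f_{e^\text{bwd}_j}=\{B_j\}$ and the fact that the backward edges (and hence their indices $j$) in $A_B$ are distinct, so the union does not collapse.
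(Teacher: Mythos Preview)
Your argument is correct and essentially identical to the paper's: both use the chain $H(f_{A_F}\mid f_{A_B}) \le H(f_{A_F}\setminus f_{A_B}) \le |f_{A_F}\setminus f_{A_B}|$ followed by $|f_{A_F}\setminus f_{A_B}| + |A_B| = |f_{A_1}|$ and Lemma~\ref{lemimport}. The only cosmetic difference is that the paper phrases it as a proof by contradiction, whereas you run the same inequalities directly; you also make explicit the identity $|f_{A_B}|=|A_B|$ that the paper uses implicitly.
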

\begin{proof}
  Suppose for contradiction that $H(f_{A_F}| f_{A_B}) > \text{rank}(\bar{U}_A) - |A_B| - |A_2|$, then $|f_{A_F} \backslash f_{A_B}| \ge H(f_{A_F} \backslash f_{A_B}) \ge H(f_{A_F}|f_{A_B}) > \text{rank}(\bar{U}_A) - |A_B| - |A_2|$. This implies $|f_{A_{1}}| > \text{rank}(\bar{U}_A) - |A_2|$, a contradiction to Lemma \ref{lemimport}.
\end{proof}

Due to the cyclic nature of $\mathcal{G}'$, imposing delay constraints on some edges is necessary to avoid stability and causality issues.  It suffices to assume there is unit delay on edges $e_1^\text{fwd},..., e_x^\text{fwd}, e_1^\text{bwd} , ..., e_y^\text{bwd}$. Note that any realistic systems should comply with these minimal delay constraints, e.g., it is not possible that a forward signal $F_i$ is a causal output depending on a backward signal $B_j$, while $B_j$ is also a causal output depending on $F_i$. Let $t$ be a time index, denote $F_i[t]$ and $B_j[t]$ as the signals transmitted on edges $e^\text{fwd}_i$ and $e^\text{bwd}_j$ during the $t$-th time step. Consider an arbitrary secure coding scheme that finishes within $T$ time steps. Below we show that the rate of this code is upper bounded by $x+ \text{rank}(\bar{U}_A) - |A|$, $\forall A \in \mathcal{A}$, as claimed in Theorem \ref{th1}. We first prove a lemma.


\begin{lemma}\label{takeoutD}
  Consider arbitrary random variables $X,Y,Z,W$, if $(Z, W) \to (Y, W) \to X$, then
  \begin{align*}
    H(X | Z , W ) \ge H(X | W) - I(Y;X | W)
  \end{align*}
\end{lemma}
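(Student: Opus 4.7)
The plan is to observe that this lemma is essentially a conditional data processing inequality, and prove it by rewriting the claimed inequality in terms of mutual informations and applying the chain rule.

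First I would rearrange the claim. Using $I(Y;X \mid W) = H(X \mid W) - H(X \mid Y, W)$ is not directly what we need; instead, rewrite the inequality as
\begin{align*}
H(X \mid W) - H(X \mid Z, W) \le I(Y; X \mid W),
\end{align*}
i.e., $I(X; Z \mid W) \le I(X; Y \mid W)$. So it suffices to establish the conditional data processing inequality $I(X; Z \mid W) \le I(X; Y \mid W)$ under the given Markov condition.

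Next I would unpack the Markov hypothesis. The statement $(Z, W) \to (Y, W) \to X$ means $X$ is conditionally independent of $(Z, W)$ given $(Y, W)$, and since $W$ appears on both sides of the conditioning, this reduces to $I(X; Z \mid Y, W) = 0$. With this in hand, I would expand $I(X; Y, Z \mid W)$ two ways using the chain rule for conditional mutual information:
\begin{align*}
I(X; Y, Z \mid W) &= I(X; Y \mid W) + I(X; Z \mid Y, W) = I(X; Y \mid W), \\
I(X; Y, Z \mid W) &= I(X; Z \mid W) + I(X; Y \mid Z, W) \ge I(X; Z \mid W),
\end{align*}
where the last inequality uses the nonnegativity of conditional mutual information. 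Comparing these two expressions yields $I(X; Z \mid W) \le I(X; Y \mid W)$, which is the desired bound.

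This proof is essentially a one-line invocation of the chain rule once the Markov condition is correctly parsed, so I do not anticipate a substantive obstacle. The only subtle point to flag is verifying that $(Z, W) \to (Y, W) \to X$ indeed collapses to $I(X; Z \mid Y, W) = 0$ (rather than something stronger or weaker), which follows directly from the definition of a Markov chain as conditional independence of the endpoints given the middle variable.
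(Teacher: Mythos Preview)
Your proposal is correct and is essentially the same argument as the paper's, just phrased in terms of mutual information rather than entropy: the paper computes $H(X\mid Y,W) = H(X\mid W) - I(Y;X\mid W)$ directly and then uses the Markov condition to conclude $H(X\mid Z,W) \ge H(X\mid Y,W)$, which is precisely your data processing inequality $I(X;Z\mid W) \le I(X;Y\mid W)$ rewritten in entropies.
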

\begin{proof}
Note that $H(X, Y|W) = H(X|Y, W) + H(Y| W) = H(Y|X, W) + H(X| W)$.
So $H(X|Y, W) = H(X|W) + H(Y|X, W) - H(Y | W) = H(X|W) - I(Y;X|W)$. Finally because $H(X|Z, W) \ge  H(X|Y,W)$, we prove the claim.
\end{proof}

\begin{proof}[Proof (of Theorem \ref{th1})]
Define $\mathcal{F}[t] = \{ F_1[t], ..., F_x[t]\}$ as all the forward signals at time $t$, and $\mathcal{B}[t] = \{ B_1[t], ..., B_y[t]\}$ as all the backward signals. Let $\mathcal{F} = \{\mathcal{F}[1], ..., \mathcal{F}[T]\}$, $\mathcal{B} = \{\mathcal{B}[1], ..., \mathcal{B}[T]\}$.
Consider any $A \in \mathcal{A}$, partition it into $A_1 + A_2$ as in Lemma \ref{lemimport} and partition $A_1$ into $A_F + A_B$ as in Corollary \ref{cor1}. Let $F_A[t] = \{ F_i[t] : e_i^\text{fwd} \in A_F \}$ denote the signals transmitted on $A_F$ at time $t$, and likewise let $B_A[t] = \{ B_j[t] : e_j^\text{bwd} \in A_B \}$. Let $a=|{A}_F|$, $b = |{A}_B|$, $c = |A_2|$. Recall that $f_{A_F}[t]$ are the signals sent by all backward edges to the edges in $A_F$ at time $t$, $M$ is the source message, and $K_D$ is all randomness generated by the sink. Now we upper bound the message rate $R_s$. It follows,
\begin{align}
  T R_s = H(M) & \stackrel{(a)}{=} H(M|K_D) - H(M|\mathcal{F}, \mathcal{B}, K_D)\nonumber\\
        & = I(M;\mathcal{F}, \mathcal{B}| K_D) \nonumber\\
        & = H(\mathcal{F}, \mathcal{B}| K_D) - H(\mathcal{F}, \mathcal{B} | M, K_D), \label{hh}
\end{align}
where (a) is due to the decoding constraint and the fact that $K_D$ is independent from $M$. We first study the first term in (\ref{hh}). Expand it according to the chain rule, we have
\begin{align}
  H(\mathcal{F}, \mathcal{B}| K_D) & = H(\mathcal{F}[1], ..., \mathcal{F}[T], \mathcal{B}[1], ..., \mathcal{B}[T] | K_D)\nonumber\\
  & \hspace{-20mm}\stackrel{(b)}{=} \sum_{i=1}^T H(\mathcal{F}[i],\mathcal{B}[i] | \mathcal{F}[0...i-1], \mathcal{B}[0...i-1],K_D)\nonumber\\
  & \hspace{-20mm}\stackrel{(c)}{=} \sum_{i=1}^T H(\mathcal{F}[i] | \mathcal{F}[0...i-1], \mathcal{B}[0...i-1],K_D)\nonumber\\
  & \hspace{-20mm}\stackrel{(d)}{\le} \sum_{i=1}^T H(\mathcal{F}[i] \backslash F_A[i] | \mathcal{F}[0...i-1], \mathcal{B}[0...i-1], K_D) \nonumber\\
  & \hspace{-9mm} + H(F_A[i]|\mathcal{F}[0...i-1], \mathcal{B}[0...i-1], K_D)\nonumber\\
  & \hspace{-20mm}\stackrel{(e)}{\le} T(x-a) +\sum_{i=1}^T H(F_A[i]|F_A[0...i-1],B_A[0...i-1]) \nonumber\\
  & \hspace{-20mm}\stackrel{(f)}{=} T(x-a) +\sum_{i=1}^T H(F_A[i]|F_A[0...i-1],B_A[0...i-1],M)\label{hfbkd}
\end{align}
Here (b) follows from the chain rule; (c) follows from the fact that $\mathcal{B}[i]$ is a function of the conditions; (d) follows from the chain rule and conditioning reduces entropy; (e) follows from conditioning reduces entropy; and (f) follows from the secrecy constraint, i.e., $M$ is independent from $F_A[0...T], B_A[0...T]$. Next we deal with the second term in (\ref{hh}).
\begin{align}
  H(\mathcal{F},\mathcal{B}|M,K_D) &  \ge H(F_A[1...T], B_A[1...T]| M, K_D)\nonumber\\
  &\hspace{-25mm} = \sum_{i=1}^T H(F_A[i],B_A[i]|F_A[0...i-1],B_A[0...i-1],M,K_D)  \nonumber\\
  &\hspace{-25mm} \ge \sum_{i=1}^T H(F_A[i]|F_A[0...i-1],B_A[0...i-1],M,K_D) \nonumber\\
  &\hspace{-25mm} \stackrel{(g)}{\ge} \sum_{i=1}^T H(F_A[i] | F_A[0...i-1], B_A[0...i-1],M) \label{hfbmkd}\\
  & \hspace{-20mm} - I(f_{A_F}[0...i-1];F_A[i]|F_A[0...i-1], B_A[0...i-1],M) \nonumber
\end{align}
Where (g) is due to Lemma \ref{takeoutD} by regarding $F_A[i]$ as $X$; $f_{A_F}[0,...,i-1]$ as $Y$; $K_D$ as $Z$; and $M, F_A[0,...,i-1], B_A[0,...,i-1]$ as $W$. Note that indeed $F_A[i]$ learns everything it can about $K_D$ from $f_{A_F}[0,...,i-1]$. Plug (\ref{hfbkd}) and (\ref{hfbmkd}) into (\ref{hh}) yields,
\begin{multline}
  T R_s \le T(x-a) \\+ \sum_{i=1}^{T-1} I(f_{A_F}[1...i];F_A[i+1]|F_A[1...i],B_A[1...i],M)  \label{rslong}
\end{multline}
Finally we bound the mutual information terms that appear in (\ref{rslong}). These terms characterize how the sink generated keys at times $1,...,i$ contribute to randomizing (and therefore protecting) the forward signals transmitted at time $i+1$.
\begin{align}
&\sum_{j=1}^{T-1} I(f_{A_F}[1...j];F_A[j+1]|F_A[1...j],B_A[1...j],M)\nonumber\\
   & \stackrel{(h)}{=}  \sum_{j=1}^{T-1} \sum_{i=1}^j I( f_{A_F}[i] ; F_A[j+1] | F_A[1...j],\nonumber\\& \hspace{40mm} B_A[1...j], f_{A_F}[0...i-1] ,M ) \nonumber\\
  & \stackrel{(i)}{=}  \sum_{i=1}^{T-1} \sum_{j=i}^{T-1} I( f_{A_F}[i] ; F_A[j+1] | F_A[1...j],\nonumber\\& \hspace{40mm} B_A[1...j], f_{A_F}[0...i-1] ,M ) \nonumber\\
  & \stackrel{(j)}{\le} \sum_{i=1}^{T-1} I(f_{A_F}[i] ; F_A[i+1] | F_A[1...i], B_A[1...i],f_{A_F}[0...i-1],M )\nonumber\\
  & \hspace{5mm} + \sum_{i=1}^{T-2} \sum_{j=i+1}^{T-1} I( f_{A_F}[i] ; F_A[j+1], B_A[j] \nonumber\\ & \hspace{20mm} | F_A[1...j], B_A[1...j-1], f_{A_F}[0...i-1] ,M ) \nonumber\\
  & \stackrel{(k)}{=} \sum_{i=1}^{T-1} I( f_{A_F}[i] ; F_A[i+1...T], B_A[i+1...T-1] \nonumber\\ & \hspace{31mm} | F_A[1...i], B_A[1...i], f_{A_F}[0...i-1] , M )  \nonumber\\
  & \stackrel{(l)}{\le} \sum_{i=1}^{T-1} H(f_{A_F}[i] | B_A[i])\nonumber\\
  & \stackrel{(m)}{\le} (T-1)(\text{rank}(\bar{U}_A) - b - c)\label{mutual}
\end{align}
Here (h) follows from the chain rule for mutual information; (i) follows from changing the order of summation; (j) follows from the fact that $I(X;Y|Z) \le I(X;Y,Z)$; (k) follows from the chain rule for mutual information; (l) follows from the definition of mutual information and conditioning reduces entropy; and (m) follows from Corollary \ref{cor1}. Finally substitute (\ref{mutual}) into (\ref{rslong}) we have
\begin{IEEEeqnarray*}{+rCl+x*}
  R_S  & \le & \frac{T(x - a + \text{rank}(\bar{U}_A) - b - c ) - \text{rank}(\bar{U}_A) + b + c}{T}\\
  & = & \frac{T(x + \text{rank}(\bar{U}_A) - |A|) - \text{rank}(\bar{U}_A) + b + c}{T}\\
  & < & x + \text{rank}(\bar{U}_A) - |A| & \QED
\end{IEEEeqnarray*}\renewcommand{\IEEEQED}{}
\end{proof}

\section{Achievability}
In this section we construct a scalar linear code that achieves the upper bound of Theorem \ref{th1} in $\bar{\mathcal{G}}$, thereby finding the secrecy capacity of $\bar{\mathcal{G}}$. The achievability result also implies that the upper bound is optimal if one only looks at the cut and its connectivity matrix. We will build the code on top of $\bar{C}$, with the idea that $\bar{C}$ is rank maximized and therefore suggests an ``optimal'' way of using the backward keys (i.e., sink generated randomness) to provide maximum randomization and protection. Hence what remains to be designed is the forward keys (source generated randomness that is independent from the message), and it turns out that $k_f =  \max_{A \in \mathcal{\mathcal{A}}} |A| - \text{rank}(\bar{U}_A)$ units of forward keys are sufficient in $\bar{\mathcal{G}}$. Therefore a rate of $R_s = x - k_f = x + \min_{A \in \mathcal{A}} \text{rank}(\bar{U}_A) - |A| $ can be achieved.

For the ease of presentation we start with the assumption that there is no delay in $\bar{\mathcal{G}}$, and will construct a code that achieves capacity exactly. We will show later that extending this code to networks with delay is straightforward, and in this case it achieves capacity asymptotically.

Let $m_1, ..., m_{R_s}$ be the messages, $K_S^1, ..., K_S^{k_f}$ be the source generated keys, $K_D^1, ..., K_D^{y}$ be the sink generated keys, all of them are i.i.d. uniformly distributed in $\mathbb{F}_q$. Let $E = (e_{ij}) \in \mathbb{F}_q^{(x+y) \times (x+y)}$ be the encoding matrix, defined by
\begin{equation}\label{Encoder}
  E = \left( \begin{array}{l|l}
    \begin{array}{l}
      G \\ \hline  \bm{0}
    \end{array}
     & \bar{C}
  \end{array}
   \right),
\end{equation}
where $G$ is a random matrix of size $x \times x$ with entries i.i.d. uniformly chosen from $\mathbb{F}_q$, and $\bm{0}$ is a zero matrix of size $y \times x$. Then the signals transmitted on the cut is
\begin{align*}
  \left( \begin{array}{l}
    F_1 \\ \vdots \\ F_x \\ B_1 \\ \vdots \\ B_y
  \end{array} \right) = E \left(
  \begin{array}{l}
    m_1\\ \vdots \\ m_{R_s} \\ K_S^1 \\ \vdots \\ K_S^{k_f} \\ K_D^1 \\ \vdots \\ K_D^{y}
  \end{array}
   \right)
\end{align*}
Notice that $E$ is a full rank square matrix with high probability since $G$ is generic and the bottom $y$ rows of $\bar{C}$ are linearly independent. Therefore the sink $D$ can decode everything. We only need to show the code is secure, i.e.,  any $z$-subset of $\{F_1, ..., F_x, B_1, ..., B_y\}$ is independent from $\{m_1, ..., m_{R_s}\}$. Since linearly independence implies independence, it suffices to show that the row space of
\begin{align*}
  E_{\text{Message}} = (I_{R_s} \ | \ \bm{0}_{R_s \times ( k_f + y)} )
\end{align*}
and the row space of any $E_A$, $A \in \mathcal{A}$ intersect trivially, where $E_A$ is the submatrix of $E$ formed by the rows that correspond to the edges in $A$. Let $E^r$ be the submatrix of $E$ by deleting the first $R_s$ columns from $E$, then it suffices to show any submatrix $E^r_A$, $A \in \mathcal{A}$ has full row rank. The following theorem shows this is true when $q$ is sufficiently large.

\begin{theorem}
  The code $E$ is secure with probability at least $1 - |\mathcal{A}| \frac{k_f (x+y)}{q}$.
\end{theorem}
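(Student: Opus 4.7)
The plan is to fix a wiretap set $A \in \mathcal{A}$, show that $E^r_A$ has full row rank $|A|$ with failure probability at most $k_f(x+y)/q$ over the random entries of $G$, and then union-bound over $\mathcal{A}$. All randomness in $E^r$ resides in the $x \times k_f$ block $G_r$ formed by the last $k_f$ columns of $G$, since the bottom $y$ rows of $\bar{C}$ (corresponding to $I_y$) form a diagonal matrix with nonzero diagonal entries: for each $A \in \mathcal{A}$ and each $j$ with $e^\text{bwd}_j \in A$, the rank-maximizing property of $\bar{U}_A$ in Lemma 1 forces $\bar{c}_{x+j,j} \ne 0$.

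For fixed $A$, partition $A = A_F \sqcup A_B$ into forward and backward edges and set $J_B := \{j : e^\text{bwd}_j \in A_B\}$. The $|A_B|$ rows of $E^r_A$ indexed by $A_B$ are each supported in a single column $k_f + j$ with the nonzero entry $\bar{c}_{x+j,j}$, and are therefore linearly independent. Using them to Gaussian-eliminate the $A_F$ rows annihilates exactly the $J_B$-indexed entries in the right block of each $A_F$ row without touching the $G_r$ block. Hence $\text{rank}(E^r_A) = |A|$ if and only if the residual $|A_F| \times (k_f + y - |A_B|)$ matrix $M$ --- whose left $k_f$ columns are the $A_F$-rows of $G_r$, and whose right $y - |A_B|$ columns are the $A_F$-rows of the top $x$ rows of $\bar{C}$ restricted to column indices in $\bar{J}_B$ --- has row rank $|A_F|$.

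By Lemma 1, the rank of the right block of $M$ equals $r := \text{rank}(\bar{U}_A) - |A_B|$, so the rank deficit in $M$'s right block is $|A_F| - r = |A| - \text{rank}(\bar{U}_A) \le k_f$ by definition of $k_f$. Select any $r$ linearly independent columns from the right block of $M$ and combine them with any $|A_F| - r \le k_f$ columns of the left block to form a square $|A_F| \times |A_F|$ candidate submatrix. Its determinant, viewed as a polynomial in the entries of $G$, is multilinear of total degree at most $k_f$; it is not identically zero, since one can assign the selected left columns to be any vectors completing the $r$ chosen right columns to a basis of $\mathbb{F}_q^{|A_F|}$. By Schwartz--Zippel this determinant is nonzero with probability at least $1 - k_f/q$; crudely upper-bounding by $k_f(x+y)/q$ accommodates the ambient matrix dimension, and union-bounding over $\mathcal{A}$ completes the proof.

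The main obstacle is the structural reduction: recognizing that (i) the diagonal form of the bottom block of $\bar{C}$ lets the $A_B$-rows perform clean Gaussian elimination, and (ii) Lemma 1's maximum-rank guarantee on $\bar{U}_A$ is precisely what limits the residual rank deficit to $|A| - \text{rank}(\bar{U}_A) \le k_f$. Once these two facts are in place, Schwartz--Zippel delivers the probability bound routinely.
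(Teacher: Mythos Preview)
Your argument is correct and follows essentially the paper's route: for each $A$ exhibit an $|A|\times|A|$ submatrix of $E^r_A$ whose determinant is a nonzero polynomial of bounded degree in the random entries of $G$, apply Schwartz--Zippel, and union-bound over $\mathcal{A}$. The paper builds an explicit full-rank witness $E^{++}_A$ satisfying the zero-block constraint of (\ref{Encoder}); you instead row-reduce with the $A_B$ rows first and then select columns---both arguments rest on the same structural fact that the rank deficit $|A|-\text{rank}(\bar{U}_A)$ is at most $k_f$, and your degree bound $k_f$ is in fact sharper than the paper's $k_f|A|$ before the final inflation to $k_f(x+y)$.

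One small caveat: rank-maximality of a single $\bar{U}_A$ does not by itself force $\bar{c}_{x+j,j}\neq 0$. If column $j$ is already spanned by the forward rows present in $\bar{U}_A$ (e.g., $A=\{e^\text{fwd}_1,e^\text{bwd}_1\}$ with $c'_{11}=1$), then $\bar{c}_{x+j,j}=0$ costs no rank. The paper relies on the same nonzero-diagonal property (``the bottom $y$ rows of $\bar{C}$ are linearly independent'') without proof, and it is easily enforced by adding the bottom $I_y$ block of $C$ to $\mathcal{U}$ in Lemma~1, so this does not affect the validity of either argument.
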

\begin{proof}
  As mentioned above it suffices to show that any $E_A^r$, $A \in \mathcal{A}$ has full row rank. Consider an arbitrary $E_A^r$, and notice that the last $y$ columns of it is exactly $\bar{U}_A$.
  Assume that the $A$ contains $a$ forward edges and $b$ backward edges. Then due to the structure of $\bar{C}$, the last $b$ rows of $\bar{U}_A$ must be linearly independent. There exist $\text{rank}(\bar{U}_A) - b$ rows among the first $a$ rows such that these rows and the last $b$ rows together form a basis of the row space of $\bar{U}_A$.
  The remaining $|A|- \text{rank}(\bar{U}_A)$ rows of $\bar{U}_A$, all of them correspond to forward edges, are in the linear span of this basis. Assuming without loss of generality that they are the first $|A|- \text{rank}(\bar{U}_A)$ rows (otherwise reorder the forward edges), we construct a matrix $E^+_A$ as
  \begin{align*}
    E^+_A = \left( \begin{array}{l|l}
      \begin{array}{l}
        I_{|A|- \text{rank}(\bar{U}_A)} \\ \bm{0}
      \end{array}
       & \bar{U}_A
    \end{array} \right)
  \end{align*}
  Notice that $k_f = \max_{A' \in \mathcal{A}} |A'| - \text{rank}(\bar{U}_A') \ge |A| - \text{rank}(\bar{U}_A)$, so the number of columns in $E^+_A$ is not larger than the number of columns in $E^r_A$.
  We append $k_f +  \text{rank}(\bar{U}_A) - |A|$ more zero columns to the left of $E^+_A$ to obtain a matrix $E^{++}_A$ which is of the same size as $E^r_A$.
  Note that $E^{++}_A$ has full row rank and satisfies the zero block constraint as defined in (\ref{Encoder}). Hence $E^{++}_A$ contains a $|A| \times |A|$ full rank submatrix, denoted by $\underline{E}_A$. Consider the polynomial matrix $\underline{E}_A[\bm{x}]$ as a submartix of $E_A^r$ by regarding all random entries in $E_A^r$ as indeterminates, then it follows that $\det (\underline{E}_A[\bm{x}])$ is not the zero polynomial because $\det (\underline{E}_A) \ne 0$.  By the Schwartz-Zippel lemma, under the random selection of entries in $E^r_A$,
  \begin{align*}
    \Pr\left\{ \det(\underline{E}_A[\bm{x}]) = 0 \right\} \le \frac{k_f |A|}{q}
  \end{align*}

  Finally by the union bound, 
    \begin{IEEEeqnarray*}{+rCl+x*}
    \Pr\left\{ \bigcup_{A \in \mathcal{A}} \det(\underline{E}_A[\bm{x}]) \ne 0 \right\}  & \ge &  1 - \sum_{A \in \mathcal{A}} \Pr\{ \det(\underline{E}_A[\bm{x}]) = 0 \}\\
    & \ge & 1 - |\mathcal{A}| \frac{k_f (x+y)}{q} & \hspace{-8mm} \QED
  \end{IEEEeqnarray*}\renewcommand{\IEEEQED}{}
\end{proof}

Extending the above code to networks with delay is straightforward. It suffices for the source to wait one time slot for the arrival of the first batch of keys, and then start transmitting normally. So the overhead is vanishing as we increase the time duration of the code.

\section{Conclusion}
We consider the problem of secure communication over a network in the presence of wiretappers. We gives a cut-set bound of secrecy capacity which takes into account the network connectivity and the contribution of backward edges. We show the bound is tight on a class of networks. One interesting problem that future works may study is to improve the cut-set bound with more network characteristics, such as the min cut from backward edges to forward edges, which may quantify any bottlenecks in the use of backward edges.





%
\bibliographystyle{IEEEtran}
\bibliography{bbndbib}

\end{document}